\newtheorem{definition}{Definition}[section]
\newtheorem{theorem}[definition]{Theorem}
\newproof{proof}{\textbf{Proof}}
\begin{document}

\begin{frontmatter}



\title{\textbf{ An ideal multi-secret sharing scheme based
              on minimal privileged coalitions }}

\author{Yun Song , Zhihui Li $^{*}$}
\cortext[cor1]{Corresponding author:Zhihui Li.(lizhihui25@yahoo.com.cn)
}

\address{College of Mathematics and Information Science, Shaanxi Normal
University, Xi'an,
710062, P. R. China}

\begin{abstract}
 How to construct an ideal multi-secret sharing scheme for general access structures is
	difficult. In this paper, we solve an open problem proposed by Spiez et al.recently [Finite Fields and
	Their Application, 2011(17) 329-342], namely to design an algorithm of privileged coalitions of any
	length if such coalitions exist. Furthermore, in terms of privileged coalitions, we show that most of the
	existing multi-secret sharing schemes based on Shamir threshold secret sharing are not perfect by analyzing Yang et al.'s scheme and Pang et al.'s scheme. Finally, based on the algorithm mentioned above, we devise an ideal multi-secret
	sharing scheme for families of access structures, which possesses more vivid authorized sets than
	that of the threshold scheme.
\end{abstract}

\begin{keyword}  $(t,n)$ threshold, track, access structure, minimal privileged coalitions, multi-secret sharing

\end{keyword}

\end{frontmatter}

\section{Introduction}
\label{intro}\emph{Single-secret sharing schemes(SSSS). } A secret sharing scheme is a method to distribute  a secret among a set $P$ of participants, which includes a pair of efficient algorithms:a distribution algorithm and a reconstruction
algorithm, implemented by a dealer and some participants.
The distribution algorithm allows a dealer to split a secret
s into different pieces, called shares, and distribute them to
participants. The reconstruction algorithm is executed by the
authorized subsets of parties who are able to reconstruct the
secret by using their respective shares. The collection of these
authorized sets of participants is called the access structure,\ $\Gamma\subset 2^{P}$. A group of participants is called a minimal authorized subsets if
they can recover the secret with their shares, and any of its
proper subgroups cannot do so. Then, the access
structure  is determined by the family of minimal authorized subsets, $(\Gamma)_{min}$. The notion of secret sharing was introduced by Shamir [1] and Blakley [2] , who considered the only schemes with a $(t,n)$-threshold access structure
formed by the set of authorized subsets of participants is the set of all subsets of size at least $t$, for some integer $t$. In 1987, Ito et al.[3] proved that there exists a secret
sharing scheme for any access structure which is more general than the threshold ones. Afterwards, secret sharing schemes have been widely employed in the construction of more elaborate
cryptographic primitives and several types of cryptographic protocols(see [4-8]).

A secret sharing scheme is called perfect if any non-authorized subset of
participants have no information about the secret, and ideal if
the shares are of the same size as that of the secret. An ideal secret sharing scheme is well-known to be the best efficiency that one can achieve with lowest storage complexity and the communication complexity[9].

\emph{Multi-secret sharing schemes(MSSS).}  Multi-secret sharing can be seen as a natural generalization of single secret sharing schemes. In 1994, Blundo et al.
[10] studied the more general case in which the set of participants share more than
one secret and different secret is related to different access structure.  Let $\textbf{$\Gamma$}=(\Gamma_{0},\ldots,\Gamma_{m-1})$ be the
m-tuple of access structures on $P$ and let $S_{0}\times S_{1}\times \cdots \times S_{m-1}$be the set from which the secrets are chosen, where for any  $0\leq j\leq m-1$, each secret $s_{j}$ to be shared
is chosen in $S_{j}$.
 In the definition of a perfect multi-secret sharing scheme,  an m-tuple of secrets $(s_{0},\ldots,s_{m-1})\in S_{0}\times\cdots\times S_{m-1}$ is  shared  in an  m-tuple $(\Gamma_{0},\ldots,\Gamma_{m-1})$ of  access
structures on $P$  in such a way that, for each $0\leq j\leq m-1$, the access structure $\Gamma_{j}$
is the set of
all subsets of $P$ that can recover secret $s_{j}.$  A perfect multi-secret sharing scheme is
defined in [10] such that the following requirements are satisfied.

\begin{definition} \label{1.1} Let\ $\textbf{$\Gamma$}=\{\Gamma_{0},\ldots,\Gamma_{m-1}\}$ be an m-tuple of access structures on the set of participants $P=\{P_{1},\ldots,P_{n}\}$. A multi-secret sharing
scheme for \ $\textbf{$\Gamma$}=\{\Gamma_{0},\ldots,\Gamma_{m-1}\}$
is a sharing of the secrets $(s_{0},\ldots,s_{m-1})\in S_{0}\times\cdots\times S_{m-1}$ in such a way that, for $0\leq j\leq m-1$,

\hspace{-0.5cm}(1) Correctness requirement: Any subset $A\subseteq P$ of participants enabled to recover

\hspace{0.2cm}$s_{j}$ can compute $s_{j}$. Formally,
for all $ A\in \Gamma_{j}$, it holds $H(S_{j}\ |\ A)=0,$

\hspace{-0.6cm}(2) Security requirement: Any subset $A\subseteq P$ of  participants not enabled to recover

\hspace{0.2cm}$s_{j}$, even knowing some of the other secrets, has no more information on $s_{j}$

\hspace{0.2cm}than that already conveyed by  the  known secrets. Formally, for all $ A\not\in \Gamma_{j}$ and

 \hspace{0.2cm}$T\subseteq\{S_{0},\cdots, S_{m-1}\}\backslash\{S_{j}\}$, it holds $H(S_{j}\ |\ AT)=H(S_{j}\ |\ T).$
\end{definition}

So far, Multi-secret sharing are widely applied not only in the field of information security but also the theories and models of
secret sharing schemes[11-19].

\emph{Our results.}  Although fruitful results
for the multi-secret sharing have been obtained, it is still difficult to devise ideal multi-secret sharing schemes for general access structures. In this paper, by using the theory of privileged coalitions[20,21], we point out that several multi-secret sharing schemes
(see[13-18])based on  Shamir's threshold scheme are not perfect, and thus are not ideal. In[20] Spiez et al. obtained an algorithm to construct the privileged coalitions of maximal length and put forward how to design an algorithm of  privileged coalitions of any length if such coalitions exist. Motivated by these concers, we solve this open problem and devise an ideal multi-secret sharing scheme(IMSSS) for families of access structures based on the algorithm mentioned above. Finally, we compare our scheme with two additional schemes[13,14] according to their performance analysis.

The rest of the paper is organized as follows: In Section 2, the basic definitions of secret Shamir's secret sharing scheme and  privileged coalitions are reviewed,  an algorithm of  privileged coalitions of any length is designed as well. In Section 3 and 4, based on theories of privileged coalitions and corresponding algorithm, we shall present our ideal multi-secret sharing scheme and make some discussions. Finally, some remarks are given in the conclusion Section.


\section{Preliminaries}
\subsection{Shamir's secret sharing scheme}
\label{prel} In a \textbf{$(t,n)$ threshold secret sharing scheme}(a scheme with $n$ participants and threshold $t$), where $2\leq t\leq n$, a dealer does not disclose a secret data to the participants but only distributes $n$ shares amongest them in such a way that at least $t$ or more participants can collectively efficiently reconstruct the secret but no coalition of less than $t$ participants can obtain noting about the secret.

In the paper, we consider \textbf{Shamir's secret sharing schemes}[20] with the secret placed as a coefficient $a_{i}$ of the scheme polynomial
\ $f(x)=a_{0}+a_{1}x+\cdots+a_{t-1}x^{t-1}$, where $\textbf{a}=(a_{0},\ldots,a_{t-1})\in F^{t}_{q}$ ($q$ is a prime power).  For a fix $f(x)$ and an $j$, such scheme is uniquely defined by a sequence $\textbf{L}=(l_{1},l_{2}\ldots,l_{n})\in F_{q^{*}}^{n}$ of pairwise different public identities, allocated to participants, called in [20] a track. The shares \ $y_{i}=f(l_{i})$ $(1\leq i\leq n)$ assigned to participants are secret.
\\
\textbf{Remark 1.}\ Shamir's $(t,n)$ threshold secret sharing and Shamir's secret sharing are different. A Shamir's $(t,n)$ threshold secret sharing must be a Shamir's secret sharing, but a Shamir's secret sharing may not be a Shamir's $(t,n)$ threshold secret sharing.

In fact, the public identities\ $\textbf{L}$\ define the\ $n\times t$\ matrix $\textbf{A(L)}=(l^{\nu}_{\mu})_{1\leq \mu\leq n
\atop {0\leq \nu\leq t-1}}$ over \ $F_{q}$\ which gives the shares by
\ $\textbf{A(L)a}^{T}=\textbf{y}^{T}$. Since\ $\textbf{L}$ is a track any coalition of $t$ participants determines a $t\times t$ non-singular
Vandermonde submatrix of the matrix\ $\textbf{A(L)}$\ consisting of the corresponding rows of the matrix\ $\textbf{A(L)}$, i.e.,
\\
(i)\  all $t\times t$ submatrices of\ $\textbf{A(L)}$ are non-singular.

A Shamir's secret sharing is a Shamir's $(t,n)$ threshold secret sharing if and only if
\\
(ii) all $(t-1)\times(t-1)$ submatrices of the matrix obtained from the matrix\ $\textbf{A(L)}$

 by removing its $j$-th
column are non-singular.

The track\ $\textbf{L}$ corresponding to such matrix $\textbf{A(L)}$ satisfying two above conditions is called $(t,j)-$admissible[21].
\begin{definition} \label{2.1} Let\ $0\leq j \leq t-1$. If the track\  $\textbf{L}\in F^{n}_{q}$, where $n\geq k$, defines a Shamir's $(t,n)$ threshold secret sharing with the secret placed as a coefficient \ $a_{j}$\ of the scheme polynomial \ $f(x)=a_{0}+a_{1}x+ \cdots +a_{t-1}x^{t-1}$, then \ $\textbf{L}$ is called a $(t,j)-$admissible track.
\end{definition}

Note that if the track\ $\textbf{L}\in F^{n}_{q}$\ is not\ $(t,j)-$admissible, then it contains a subtrack consisting of less than\ $t$\ participants which can reconstruct the secret by themselves, forming a privileged coalition[20].

\subsection{ Minimal privileged coalition}
\begin{definition}\label{2.3}{\cite{Spiez}} Let\ $r< t$ and fix $j$, $0< j < t-1.$ A coalition of $r$ participants $\textbf{L}=(l_{1},l_{2},\cdots,l_{r})\in F^{r}_{q}$ is said to be a $(t,j)-$privileged coalition, if they can reconstruct the secret, placed as the coefficient $a_{j}$ of the scheme polynomial $f(x)=a_{0}+a_{1}x+ \cdots +a_{t-1}x^{t-1}$.
\end{definition}
\begin{definition}\label{2.4}Let\ $r< t$ and fix $j$, $0< j < t-1.$ A coalition of $r$ participants $\textbf{L}=(l_{1},l_{2},\cdots,l_{r})\in F^{r}_{q}$ is said to be a $(t,j)-$minimal privileged coalition, if $\textbf{L}$ is a privileged coalition without any subtracks that can reconstruct the secret.
\end{definition}

Note that the tracks of length $t$, which can be extended by privileged coalitions of length $r$, can reconstruct the secrets placed as any coefficients of the scheme polynomial. Then we would have

\begin{definition} \label{2.5} Fix $j$, $0< j < t-1.$ A coalition of $t$ participants $\textbf{L}=(l_{1},l_{2},\cdots,l_{t})\in F^{t}_{q}$ is said to be a $(t,j)-$unextended track, if $\textbf{L}$ can not be extended by a $(t,j)-$privileged coalition.
\end{definition}

By Definition 2.3, the minimal authorized subsets of Shamir's secret sharing schemes are determined by minimal privileged coalitions and unextended tracks. Therefore, constructing minimal privileged coalitions become the key to the determination of the access structures. In order to devise an algorithm to obtain privileged coalitions, we will need the following theorem about the characterization of $(t,j)-$privileged coalitions whose proof can be found in[20].

\begin{theorem} \label{2.5} Assume that $0< j < t-1$, $j<r\leq t-1$.
Let $\textbf{L}=(l_{1},l_{2},\cdots,l_{r})\in F^{r}_{q}$ be a track, and let $t\leq q$. Then $\textbf{L}$ is a $(t,j)-$privileged coalition if and only if $$\tau_{\omega}(\textbf{L})=0, \ \ \ for \ all\  \omega\in \{r-j,\cdots,t-1-j\},$$
where $\tau_{\omega}(\textbf{L})$ denotes the elementary symmetric polynomial of total degree $\omega$.
\end{theorem}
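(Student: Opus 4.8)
The plan is to recast the reconstruction condition as a linear-algebra statement about the coalition's evaluation matrix, and then translate it into the vanishing of elementary symmetric polynomials by means of Vieta's formulas. Write $M$ for the $r\times t$ matrix whose $i$-th row is $(1,l_i,l_i^{2},\dots,l_i^{t-1})$, so that the shares satisfy $M\mathbf{a}^{T}=\mathbf{y}^{T}$, where $\mathbf{a}=(a_0,\dots,a_{t-1})$ is the unknown coefficient vector of $f$. Because the $l_i$ are pairwise distinct (the track condition), the $r$ Vandermonde rows are linearly independent and $M$ has rank $r$; since $r<t$ the system is underdetermined and its solution set is an affine subspace. The coalition can recover $a_j$ exactly when the coordinate functional $\mathbf{a}\mapsto a_j$ is constant on this affine set, i.e. when $e_j$ (the covector with a single $1$ in the slot indexed by $j$) lies in $\operatorname{rowspace}(M)$; in that case $a_j=\mathbf{c}\,\mathbf{y}^{T}$ for a fixed $\mathbf{c}$, so the shares determine $a_j$.

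The key reduction is to pass to the orthogonal complement. Over any field the standard bilinear form gives $\operatorname{rowspace}(M)=(\ker M)^{\perp}$, so $e_j\in\operatorname{rowspace}(M)$ if and only if the $j$-th coordinate of every vector in $\ker M$ vanishes. I would then identify $\ker M$ explicitly: a vector $\mathbf{v}=(v_0,\dots,v_{t-1})$ lies in $\ker M$ precisely when the polynomial $g_{\mathbf v}(x)=\sum_{\nu}v_\nu x^{\nu}$ of degree below $t$ vanishes at all the $l_i$, that is, when $g_{\mathbf v}$ is divisible by $p(x)=\prod_{i=1}^{r}(x-l_i)$. Hence $\ker M$ is spanned by the coefficient vectors of $x^{k}p(x)$ for $k=0,1,\dots,t-r-1$, a basis of the expected size $t-r$. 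The reconstruction criterion therefore becomes: the coefficient of $x^{j}$ in $x^{k}p(x)$ is zero for every $k\in\{0,\dots,t-r-1\}$.

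It remains to expand these coefficients. Vieta's formulas give $p(x)=\sum_{m=0}^{r}(-1)^{r-m}\tau_{r-m}(\mathbf{L})\,x^{m}$, so the coefficient of $x^{j}$ in $x^{k}p(x)$ equals $(-1)^{\omega}\tau_{\omega}(\mathbf{L})$ with $\omega=r-j+k$, under the convention $\tau_{\omega}=0$ for $\omega<0$ or $\omega>r$. As $k$ runs over $\{0,\dots,t-r-1\}$, the index $\omega$ runs over exactly $\{r-j,\dots,t-1-j\}$, so the criterion reads $\tau_{\omega}(\mathbf{L})=0$ for all $\omega$ in that range, which is the assertion of the theorem. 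The hypotheses $0<j<t-1$ and $j<r\le t-1$ ensure $1\le r-j\le t-1-j$, so the range is a nonempty set of positive integers; any $\omega>r$ occurring in it yields a condition that holds automatically.

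The step I expect to be most delicate is the index bookkeeping in this last translation: aligning the null-space parameter $k$ with the degree $\omega$ of the symmetric function, and checking the boundary indices $r-j$ and $t-1-j$ against the hypotheses, together with the trivially satisfied conditions when $\omega>r$. By contrast, the conceptual core is just the duality ``$e_j$ lies in the row space iff its $j$-th coordinate annihilates the kernel,'' combined with the identification of $\ker M$ with the multiples of $p(x)$ of degree below $t$.
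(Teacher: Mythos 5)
Your proposal is correct, and it proves the theorem by a genuinely different route than the one the paper relies on. Note first that the paper does not prove Theorem 2.5 at all: it explicitly defers to reference [20] (Spiez et al.), and the machinery of that proof is what resurfaces in the paper's proof of Theorem 4.1. That route is determinant-based: one adjoins a disjoint ``phantom'' track $\mathbf{u}$ of length $t-r$ (this is precisely where the hypothesis $t\le q$ is needed), solves the enlarged $t\times t$ system by Cramer's rule, and exploits the factorization of generalized Vandermonde determinants as a classical Vandermonde determinant times an elementary symmetric polynomial --- this is what produces the formulas labelled $(1')$ and $(2')$ in Section 4, with Lemma 2 of [20] converting the vanishing of the phantom-share coefficients $\tau_{t-1-j}(\mathbf{L}\,||\,\hat{\mathbf{u}}_m)$ into the stated conditions $\tau_{\omega}(\mathbf{L})=0$ for $\omega\in\{r-j,\dots,t-1-j\}$. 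You instead phrase reconstructability as $e_j\in\operatorname{rowspace}(M)$ for the coalition's $r\times t$ matrix $M$, identify $\ker M$ with the multiples of $p(x)=\prod_{i}(x-l_i)$ of degree below $t$, and read off the conditions via Vieta; your index bookkeeping ($\omega=r-j+k$ running over $\{r-j,\dots,t-1-j\}$, with vacuous conditions when $\omega>r$) checks out, as does the duality step $(\ker M)^{\perp}=\operatorname{rowspace}(M)$, which is valid over any field in finite dimension. Your argument buys three things: it is elementary and self-contained (no generalized Vandermonde or Schur-function identities); it never uses the hypothesis $t\le q$, which exists in [20] only to guarantee the disjoint extension $\mathbf{u}$; and it gives the security direction essentially for free, since a kernel vector with nonzero $j$-th coordinate shows that $a_j$ ranges over all of $F_q$ on the solution set, i.e.\ a non-privileged coalition gets no information. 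What the determinant route buys, and why this paper needs it, is the closed-form reconstruction formula $(2')$ expressing $s_j$ through the shares and the quantities $V(\cdot)\tau_{t-1-j}(\cdot)$: that explicit formula is reused verbatim in the correctness and security proof of the scheme (Theorem 4.1), whereas your approach produces the reconstruction coefficients only implicitly, as a solution $\mathbf{c}$ of $\mathbf{c}M=e_j$.
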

\subsection{Privileged coalitions of any length}
\label{prel}In this section, we solve an open problem proposed in[20] recently, namely to design an algorithm of privileged coalitions of any length if such coalitions exist. For simplicity, we will focus on the tracks whose value of each component is clamped to the range of $\{1,2,\cdots,N\}$, and we call these coalitions $(t,j)-$privileged coalitions with respect to $N$. Using Algorithm 1 we can obtain $(t,j)-$privileged coalitions with respect to $N$ of length $r$ over $F_{p}$\ ($p$\ is a odd prime). By Definition 2.3, for the given $t,r$ with $\frac{t+1}{2}\leq r\leq t-1$, we can obtain $(t,j)-$minimal privileged coalitions with respect to $N$ of length $r$ by detaching ${r-r_{min} \choose p-r_{min}}N_{min}$ non-minimal privileged coalitions from $(t,j)-$privileged coalitions of length $r$, where $N_{min}$ denotes the number of privileged coalitions of the shortest length $r_{min}$.

As an illustration, we investigate the Shamir's secret scheme with the number of participants $n=13$ and threshold $t=7$. In the appendix, we then present two tables of $(7,j)-$minimal privileged coalitions with respect to $N=13$ of any length if such coalitions exist.

\textbf{\textbf{Algorithm 1}}

\textbf{Input} \quad positive integers\ $t,r,j$ with\ $t\geq 3$, $\frac{t+1}{2}\leq r\leq t-1$, $t-r\leq j\leq r-1$.

\textbf{Output} \quad all of the $(t,j)-$privileged coalitions of length $r$ with respect to $N$  over $F_{p}$.

 1.  Compute the range of values for $j$, and set $a\leftarrow r - j$, $b\leftarrow t - 1 - j$. Take

 \hspace{0.3cm} $J$ is a set of all integers from  $a$ to $b$.

2.  Obtain all of the tracks of length $r$ with respect to $N$  over $F_{p}$.

   \hspace{0.5cm}2.1. \quad $B=(1,2,\cdots,n)$, for \ $i=1$\  to\  $N$

   \hspace{1cm}2.1.1\quad For\  $j=i+1$\  to\  $N$

   \hspace{2cm}\quad If \ $B(j)>B(i)$  \ \ ($B(j)$ denotes the elements of the $j-$th position

   \hspace{5.6cm}of the array $B$),

   \hspace{1.5cm}2.1.1.1 \quad For \ $k=j+1$\  to\  $N$

   \hspace{2cm}\quad If \ $B(k)>B(j)$

   \hspace{2.5cm}\ \ \ \ \ \ \ \ \ \ \ \vdots

  \hspace{1.5cm}  \quad $r-1$ Nested loops are carried out in turn, then $\underbrace{(\cdots, B(k), B(j), B(i))}_{r}$

   \hspace{1.8cm} is a track.

  3.  Find $(t,j)-$privileged coalitions from the tracks obtained in Step 2 by means

    \hspace{0.5cm}of the principle about Vieta theorem of high power equation.

   \hspace{0.5cm}3.1.\quad For\  $i=1$ to\  $b-a+1$

  \hspace{1cm} 3.1.1.\quad  Set $p\leftarrow J(i)$, and go through tracks that obtained in Step 2.

  \hspace{2.5cm}Let $C=1$. For\  $j=1$\  to\ $r$

  \hspace{2.5cm}\quad Set $C\leftarrow C\times (x+\textbf{L}(j))$

   \hspace{1cm} 3.1.2.\quad Set$f\leftarrow C$, and select\ $\textbf{m}$ as a vector whose components are

   \hspace{2.5cm}coefficients of the expansion of $f(x)$ in ascending power of $x$.

  \hspace{2cm}\quad  Set $s\leftarrow \textbf{m}(r-p+1)$, then set $v_{p}\leftarrow mod (s,p)$.

  \hspace{0.5cm} 3.2.\quad If \ $v_{p}=0$, then return\ $(\textbf{L})$.

\section{An ideal multi-secret sharing scheme}
\label{intro}When the probability distributions over the secrets and shares are uniform, a secret sharing scheme is said to be ideal if all secrets and shares are the same size[19]. In this section we firstly define a (t-1)-tuple $\textbf{$\Gamma$}=(\Gamma_{0},\ldots,\Gamma_{t-2})$ of access structures and then we devise  an IMSSS which realizes such a (t-1)-tuple $\textbf{$\Gamma$}=(\Gamma_{0},\ldots,\Gamma_{t-2})$ of access structures.
\subsection{ Definition of the access structures}
Let\ $P=\{P_{1},\ldots,P_{n}\}$\ be the set of participants, we firstly define such an (t-1)-tuple  $\textbf{$\Gamma$}=(\Gamma_{0},\ldots,\Gamma_{t-2})$ as follows:

\hspace{0.03cm}(1)\quad \ $(\Gamma_{0})_{min}=\{A\subseteq P\mid |A|=t\}.$
\newpage
(2) \quad $(\Gamma_{j})_{min}=\{A\subseteq P \mid \textbf{L}_{A}\  is\  either\  a\  (t,j)-unextended\  track\  or\  a\  (t,j)-$

\hspace{4.6cm}$privileged\  coalition,(1\leq j\leq t-2)\}$.

As each component of a track  can be used as the identity of the participants, we can determine the minimal authorized subsets of  $(\Gamma_{j})_{min}$ for the secret placed as a coefficient $a_{j}$ of the scheme polynomial $f(x)=a_{0}+a_{1}x+\cdots+a_{t-1}x^{t-1}$ by getting both (t,j)-unextended track and (t,j)-privileged coalition using the  Algorithm 1.

Obviously, a subset $A\subseteq P$ is likely to reconstruct more than one secret. For example, if $A\in \Gamma_{i}$ and $A\in \Gamma_{j}$, then $A$ can reconstruct not only  $s_{i}$ but also $s_{j}$, where  $0\leq i,j\leq t-2$ and $i\neq j$.
\\
\\
\textbf{Example 3.1} \quad Let\ $P=\{P_{1},P_{2},P_{3},P_{4},P_{5},P_{6}\}, t=5,q=7.$ Without loss of generality, D distributes $i$ to participant $P_{i}$ for $1\leq i\leq 6.$ It follows from algorithm 1 that (5,1)-privileged coalition over $F_{7}$ is $(1,2,5,6),\ (1,3,4,6)$\ and\ $(2,3,4,5)$;\ (5,2)-privileged coalition is $(1,2,4),(3,5,6)$;\ (5,3)-privileged coalition is \ $(1,2,5,6),\ (1,3,4,6)$\ and\ $(2,3,4,5)$, but without (t,j)-unextended track. Hence  $(t-1)$-tuple  $\textbf{$\Gamma$}=(\Gamma_{0},\ldots,\Gamma_{m-2})$ can be defined as follows:

$(\Gamma_{0})_{min}=\{\{P_{1},P_{2},P_{3},P_{4},P_{5}\},\{P_{1},P_{2},P_{3},P_{4},P_{6}\}\{P_{1},P_{2},P_{4},P_{5},P_{6}\},$

\ \ \ \ \ \ \ \  \ \ \   \ \ \ $\{P_{1},P_{2},P_{3},P_{5},P_{6}\},\{P_{1},P_{3},P_{4},P_{5},P_{6}\},\{P_{2},P_{3},P_{4},P_{5},P_{6}\}\},$

$(\Gamma_{1})_{min}=(\Gamma_{3})_{min}=\{\{P_{1},P_{3},P_{4},P_{6}\},\{P_{1},P_{2},P_{5},P_{6}\},\{P_{2},P_{3},P_{4},P_{5}\}\},$
\ \ \ \ \

$(\Gamma_{2})_{min}=\{\{P_{1},P_{2},P_{4}\},\{P_{3},P_{5},P_{6}\}\}.$

\subsection{ Construction of the IMSSS}
\subsubsection{Initialization phase}

\label{intro}Note that $s_{0},\ldots,s_{t-2}$ denote t-1 secrets to be shared, where $(s_{0},\ldots,s_{t-2})\in S_{0}\times\cdots\times S_{t-2}.$ The dealer D randomly chooses $n$ pairwise different $l_{i}$ and  assigns them to every participant $P_{i}$ as their public identity, where $l_{i}\in F_{q}^{*}$ for $1\leq i\leq n.$ Then D computes the $(t,j)$-unextended track and $(t,j)$-privileged coalition by means of algorithm 1 for $0\leq j\leq t-2.$
\subsubsection{Distribute phase}
\hspace{0.2cm}The dealer D performs the following steps:

\hspace{-0.5cm}(1) \ \ Choose an integer $a_{t-1}$ from $F_{q}^{*}$ and construct $(t-1)$th degree polynomial

\hspace{0.45cm}$f(x)$ mod $q$, where $0<s_{0},\ldots,s_{t-2},a_{t-1}<q$ as follows:$f(x)=s_{0}+\cdots+$

\hspace{0.45cm}$s_{t-2}x^{t-2}+a_{t-1}x^{t-1}\mod q.$

\hspace{-0.5cm}(2) \ \ Compute $y_{i}=f(l_{i})\mod q $ for $i=1,2\ldots,n$ and  distribute them to every

\hspace{0.45cm}participant $P_{i}$ as secret shares by a secret channel.
\subsubsection{Recovery  phase}
Assume that $(0\leq j\leq t-2).$ for any $A\in (\Gamma_{j})_{min}$, if $|A|=t$, then any subset $A\in (\Gamma_{j})_{min}$ can reconstruct the secret $s_{j}$ by solving equation system $\textbf{A}(\textbf{L}_{A})\ \textbf{a}^{T}=\textbf{y}^{T}$, where $\textbf{L}_{A}$ is corresponding track of $A$; if $|A|<t$ and $\textbf{L}=(l_{i_{1}},l_{i_{2}},\ldots,l_{i_{r}})\in F^{r}_{q}$ is a $(t,j)$-privileged coalition of $A$, let $\textbf{L}^{'}_{A}=(l_{i_{1}},l_{i_{2}},\ldots,l_{i_{r}},l_{i_{r+1}},\ldots,l_{i_{t}})\in F^{t}_{q}$, then the participants in $A$ can reconstruct the secret $s_{j}$ by solving equation system $\textbf{A}(\textbf{L}^{'}_{A})\ \textbf{a}^{T}=\textbf{y}^{T},$ where $\textbf{a}=(s_{0},\ldots, s_{t-2}, a_{t-1})\in F^{t}_{q},$  $\textbf{y}=(y_{i_{1}},\ldots,y_{i_{t}})$,
and $\textbf{A}(\textbf{L}_{A})$,
$\textbf{A}(\textbf{L}^{'}_{A})$ are all $t\times t$ matrix which can be specifically written as $(l^{\nu}_{\mu})_{1\leq \mu\leq n
\atop {0\leq \nu\leq t-1}}.$
\section{Correctness and security proof}
In order to prove that our scheme is perfect, we need to introduce some results on the generalized the Vandermonde determinants. As usual, for a $k-$tuple of indeterminates $\textbf{x}=(x_{1},\ldots,x_{k})$ and a $k-$tuple of increasing non-negative integers $\textbf{c}=(c_{1},\ldots,c_{k})$ we call $V_{\textbf{c}}(\textbf{x})=\textrm{det}((x^{c_{\nu}}_{\mu})_{1\leq \nu,\mu\leq k})$ generalized Vandermonde determinant. Write $\textbf{e}_{k}=(0,\ldots,k-1).$ If $\textbf{c}=\textbf{e}_{k}$ then $V_{\textbf{c}}(\textbf{x})$ equals the classical Vandermonde determinant
$V(\textbf{x})=\prod_{1\leq i<j\leq k}(x_{j}-x_{i})$.

\begin{theorem} \label{4.1}The scheme presented in Section 3 is a perfect multi-secret sharing scheme.
\end{theorem}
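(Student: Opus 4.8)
The plan is to verify the two requirements of Definition \ref{1.1} separately, namely correctness and security, for each secret $s_{j}$ with $0\leq j\leq t-2$. The whole argument rests on the single polynomial $f(x)=s_{0}+\cdots+s_{t-2}x^{t-2}+a_{t-1}x^{t-1}$ whose low-order coefficients are the secrets, together with the characterization of privileged coalitions in Theorem \ref{2.5} and the generalized Vandermonde determinants introduced just above the statement.

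For correctness, I would take any $A\in\Gamma_{j}$ and split into the two cases built into the definition of $(\Gamma_{j})_{min}$. If $|A|=t$ (the unextended-track case, and also the defining case of $\Gamma_{0}$), then $\textbf{A}(\textbf{L}_{A})$ is a $t\times t$ classical Vandermonde matrix; since the identities $l_{i}$ are pairwise distinct its determinant $V(\textbf{L}_{A})=\prod_{i<j}(l_{j}-l_{i})$ is nonzero, so the system $\textbf{A}(\textbf{L}_{A})\,\textbf{a}^{T}=\textbf{y}^{T}$ has a unique solution and in particular $a_{j}=s_{j}$ is recovered, giving $H(S_{j}\mid A)=0$. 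If instead $|A|=r<t$ is a $(t,j)$-privileged coalition, then by Theorem \ref{2.5} the conditions $\tau_{\omega}(\textbf{L}_{A})=0$ for $\omega\in\{r-j,\dots,t-1-j\}$ hold, and I would show these are exactly the linear relations among the rows that let $A$ solve for the single coefficient $a_{j}$ even though it cannot determine $f$ fully; concretely, the extension $\textbf{L}'_{A}$ in the recovery phase produces a consistent $t\times t$ system whose $j$-th coordinate is forced regardless of the padding values, so $s_{j}$ is again recoverable.

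For security I must show that a set $A\notin\Gamma_{j}$, even given any collection $T$ of the other secrets, gains no information about $s_{j}$, i.e. $H(S_{j}\mid AT)=H(S_{j}\mid T)$. The strategy is the counting/bijection argument: fixing the shares of $A$, the known secrets in $T$, and an arbitrary candidate value of $s_{j}$, I would prove that the number of admissible polynomials $f$ (equivalently of completions of $\textbf{a}$ over $F_{q}$) consistent with all this data is the same for every value of $s_{j}$. This is where the generalized Vandermonde determinants enter: the relevant coefficient matrix, after deleting the columns indexed by the already-known coefficients and the row structure coming from $A$'s identities, is a generalized Vandermonde matrix $V_{\textbf{c}}(\textbf{L}_{A})$, and the key point is that $A\notin\Gamma_{j}$ forces this matrix to have full column rank in the $j$-direction, so $s_{j}$ is a free parameter that ranges uniformly over $F_{q}$. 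Equivalently, $A\notin\Gamma_{j}$ means $A$ is neither of size $t$ nor a $(t,j)$-privileged coalition, so by the contrapositive of Theorem \ref{2.5} at least one elementary symmetric polynomial $\tau_{\omega}(\textbf{L}_{A})$ with $\omega\in\{r-j,\dots,t-1-j\}$ is nonzero, which is precisely the nondegeneracy that keeps $a_{j}$ undetermined.

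The main obstacle I anticipate is the security half, specifically making the uniformity argument airtight when $T$ contains several of the other secrets simultaneously: I must check that conditioning on $T$ does not secretly leak $s_{j}$ through the shared polynomial, and that the relevant submatrix still has the rank needed for $s_{j}$ to vary freely over all of $F_{q}$ after the columns of $T$ are also fixed. This amounts to showing a generalized Vandermonde minor of the form $V_{\textbf{c}}(\textbf{L}_{A})$ is nonzero for the appropriate increasing index tuple $\textbf{c}$ determined jointly by $j$ and the indices in $T$; the correctness half is essentially immediate from the Vandermonde nonsingularity and from Theorem \ref{2.5}, so the real work is packaging this rank/nonvanishing fact and translating it back into the entropy equality $H(S_{j}\mid AT)=H(S_{j}\mid T)$.
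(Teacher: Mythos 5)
Your correctness half coincides with the paper's own argument: for $|A|=t$ it is Cramer's rule on a nonsingular Vandermonde system, and for a privileged coalition the paper realizes your ``the padding is irrelevant'' claim explicitly through its formula $(2')$, in which condition $(1')$ (Theorem 2.5 combined with Lemma 2 of [20]) forces the coefficients $\tau_{t-1-j}(\textbf{L}\,\|\,\hat{\textbf{u}}_{k-r})$ multiplying the unknown padding shares $y_{r+1},\dots,y_{t}$ to vanish, so the $r$ genuine shares determine $s_{j}$. That part is sound and is essentially what the paper does.

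The genuine gap is precisely the obstacle you flagged in the security half, and it is fatal rather than technical: the rank fact your counting argument needs is false once $T\neq\emptyset$. Non-membership $A\notin\Gamma_{j}$ controls the full $t$-column matrix $\textbf{A}(\textbf{L}_{A})$ (contrapositive of Theorem 2.5: some $\tau_{\omega}(\textbf{L}_{A})\neq 0$), but it gives no control over the matrix obtained after deleting the columns indexed by the secrets in $T$, and it is that reduced $r\times(t-|T|)$ system that decides whether $s_{j}$ stays free. Since the scheme has only one random coefficient $a_{t-1}$, as soon as $|T|\geq t-r$ the reduced system has at least as many equations as unknowns and is typically nonsingular. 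Concretely, in the paper's own Example 3.1 ($t=5$, $q=7$): take $A=\{P_{1},P_{2}\}\notin\Gamma_{2}$ and $T=\{s_{0},s_{1},s_{3}\}$. Then over $F_{7}$,
\begin{align*}
y_{1}-s_{0}-s_{1}-s_{3}&=s_{2}+a_{4},\\
y_{2}-s_{0}-2s_{1}-s_{3}&=4s_{2}+2a_{4},
\end{align*}
and the coefficient matrix has determinant $2-4=-2\not\equiv 0 \pmod 7$, so $A$ computes $s_{2}$ exactly: $H(S_{2}\mid AT)=0<H(S_{2}\mid T)$. Hence requirement (2) of Definition 1.1 fails, and no uniformity/bijection argument can close the gap you anticipated (for $T=\emptyset$ your projection argument does work, since the projection of the affine solution space onto the $j$-th coordinate is either a point or all of $F_{q}$; the failure is entirely in the ``even knowing some of the other secrets'' clause). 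You should also know that the paper's own proof of part (2) does not close it either: it observes that the particular reconstruction formula $(2')$ involves a missing share $y_{m'+r}$ and from this alone concludes that $A$ learns nothing even when it knows other secrets, which is a non sequitur --- one formula requiring an unknown share does not preclude computing $s_{j}$ by eliminating unknowns as above. Ironically, this is the very defect the paper attributes to the Yang and Pang schemes.
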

\begin{proof} When\ $j=0$\, due to the perfect property of the $(t,n)$ secret sharing schemes, our scheme satisfies the two conditions of Definition 1.1. Now we consider that $1\leq j\leq t-2.$

(1) \quad  If $|A|=t$, then by solving equation system $\textbf{A}(\textbf{L}_{A})\ \textbf{a}^{T}=\textbf{y}^{T},$ the participants in $A$  can obtain the unique solution ${s_{j}}$ in terms of Cramer rule.  If $|A|<t$ and $\textbf{L}=(l_{i_{1}},l_{i_{2}},\ldots,l_{i_{r}})\in F^{r}_{q}$ is a $(t,j)$-privileged coalition of $A$, by Theorem 2.5, $\tau_{\omega}(\textbf{L})=0,$ for all $\omega\in \{r-j,\cdots,t-1-j\}$. By Lemma 2[20], $\tau_{\omega}(\textbf{L})=0$ for all $\omega\in \{r-j,\cdots,t-1-j\}$ if and only if
 $$\quad\quad\quad\quad\quad\quad\tau_{t-1-j}(\textbf{L}\ ||\ \hat{\textbf{u}}_{m})=0 \ \ \ for\  all \ m,\  1\leq m\leq t-r, \quad\quad\quad\quad      (1^{'})$$
let $\textbf{u}=(u_{1},\ldots,u_{t-r})\in F_{q}^{t-r}$ be a track disjoint with $\textbf{L}$, and $\hat{\textbf{u}}_{m}$\ denotes the sequence obtained from $\textbf{u}$ by removing the term $u_{m}.$ Let\begin{displaymath}
y_{k} = \left\{ \begin{array}{ll}
f(l_{k}) & \textrm{µ±\ $k \in \{1,\ldots,r\}$}\\
f(u_{k-r})& \textrm{µ±\ $k \in \{r+1,\ldots,t\}$}\\
\end{array} ,\right.
\end{displaymath}
and let $\textbf{L}^{'}_{A}=(l_{i_{1}},l_{i_{2}},\ldots,l_{i_{r}},u_{1},\ldots,u_{t-r})\in F^{t}_{q}$, then $s_{j}$ can be obtained by solving  equation system
$\textbf{A}(\textbf{L}^{'}_{A})\ \textbf{a}^{T}=\textbf{y}^{T}$
\begin{eqnarray*}
 s_{j}&=&\frac{1}{V(\textbf{L}\ ||\ \textbf{u})}(\sum^{r}_{k=1}(-1)^{k+j+1}V(\hat{\textbf{L}}_{k}\ ||\ \textbf{u})
\tau_{t-1-j}(\hat{L}_{k}\ ||\ \textbf{u})y_{k}+\\& &
\sum^{t}_{k=r+1}(-1)^{k+j+1}V(\textbf{L}\ ||\ \hat{\textbf{u}}_{k-r})
\tau_{t-1-j}(\textbf{L}\ ||\ \hat{\textbf{u}}_{k-r})y_{k}.\quad (2^{'})
\end{eqnarray*}
By $(1^{'})$ and $(2^{'}),$ $(t,j)$-privileged coalition of $A$ can compute the secret $s_{j}$. Hence, it holds that for all $A\in \Gamma_{j}, H(S_{j}\ |\ A)=0.$

(2) If $ A\not\in \Gamma_{j}$, then $|A|<t$ and $\textbf{L}_{A}$ is not a $(t,j)-$ privileged coalition. In view of $(1^{'})$, there exists a $m^{'}$, where $1\leq m^{'}\leq t-r$ such that $\tau_{t-1-j}(\textbf{L}\ ||\ \hat{\textbf{u}}_{m^{'}})\neq 0$. Thus, the share $y_{m^{'}+r}$ of $u_{m^{'}}$ is needed. By $(2^{'}$) we can obtain that the participants in $A$ have no information on $s_{j}$, even knowing some of the other secrets. Hence, it holds that $ H(S_{j}\ |\ AT)=H(S_{j}\ |\ T),$ where $T$ denotes the secrets that $A$ can compute.

Therefore, according to Definition 1.1, the scheme is a perfect multi-secret sharing scheme.
\end{proof}

As a consequence, our scheme is an ideal and perfect linear multi-secret sharing scheme. In 2004 and 2005, Yang et al.[13] and Pang et al.[14] proposed multi-secret sharing schemes based on Shamir's threshold secret sharing, respectively, which are relatively efficient with lower cost of computing due to the Lagrange interpolation operation that is employed
in the process of schemes construction. However, the existence of the privileged coalitions and the Lagrange interpolation polynomial  participants use will result in a fact that the union of less than $t$ participants may compute the coefficients of the polynomial corresponding to secrets by solving equation system, thereby obtaining further information about the secret. Consequently, both of the two schemes are not perfect, i.e., there is information leakages. Table 1 is for the comparison among three schemes.

Likewise, the multi-secret sharing schemes[15-18] are not perfect, and thus are not ideal.
\\
\textbf{Remark 2.}  The validity of the shares can be verified in a verifiable secret sharing scheme, thus participants are not able to
cheat. Based on our scheme, we can further construct an ideal verifiable multi-secret sharing scheme by adding the existing
verifiability methods where the intractability of discrete logarithm problem is frequently employed (see[15-18]).

{\vspace{0.5cm}
{\scriptsize{\textbf{Table\ 1} \ \ The comparison of performance among three schemes
}\vspace{-4mm}\\
\begin{center} \doublerulesep 1.5pt \tabcolsep 6pt
\begin{tabular*}{\textwidth}{lccc}
\hline Capability &Our scheme & Yang's scheme. & Pang's scheme \\ \hline
 Multi-secret  & Yes&Yes&Yes\\  Each participant holds only one share &  Yes&Yes&Yes\\   Recover multi-secrets by Lagrange interpolating polynomials&  No&Yes&Yes\\  Access structures corresponding to each secret is the same & No&Yes&Yes\\  Access structures possess more vivid authorized sets&  Yes&No&No\\  The scheme is perfect &  Yes&No&No\\ The scheme is ideal &  Yes&No&No\\ \hline
\end{tabular*}
\end{center}}

\section{Conclusions}
In this paper, we consider an ideal multi-secret sharing scheme based on the theories of minimal privileged coalitions and Shamir's secret sharing, where for a set of participants $P=\{P_{1},\ldots,P_{n}\}$, each subset of $\Gamma_{j}$ carries different target secret $s_{j}$ for $0\leq j\leq t-2$. In particular, in order to obtain privileged coalitions, we devise an algorithm of privileged coalitions of any length if such coalitions exist. In real terms, we can integrate data into a database which obtained from the experiment for different value $t$ in accordance with our needs, and then we can extract the results as identities of participants applied to the construction of the scheme for practical applications.

\section{Acknowledgements}

This work was supported by the National Natural Science Foundation of China (Grant No. 60873119).



\section{References}

\bigskip

\newpage

\begin{center} \textbf{Table\ 2} \ \ The number of $(7,j)$-minimal privileged coalitions in the track $(1,\ldots,13)$\ over\ $F_{p}$\ for \ $1\leq j\leq5$ \end{center}\vspace{-4mm}
\begin{center} \doublerulesep 0.2pt \tabcolsep 7pt
\begin{tabular*}{\textwidth}{c c c c c c r c c c c c}\hline
 p  &   j=1 &   j=2 &   j=3	&  j=4  &   j=5	&     p &      j=1 &    j=2 &    j=3&    j=4&    j=5 \\ \hline
13	&	72	&	114	&	71	&	93	&	132	&¡¡¡¡137&		15&		10 &		 11&		18	&	0\\
17  &	101	&	69	&	76  &	72	&	98	&¡¡¡¡139&		18&		12 &		 10&		13&		0\\
19	&	90	&	53	&	72	&	57  &	92	&¡¡¡¡149&		13&		13 &		 8&		8&		0\\
23	&	72  &   61	&	58  &	63  &	83	&¡¡¡¡151&		12&		9 &		    13&	    13&	    0\\
29	&	60	&	43	&	51	&	46  &	26	&¡¡  157&       14&     10 &        13&     10 &    0\\
31	&	58	&	51	&	40	&	39	&	32	&¡¡¡¡163&		13	&	16	&	 9	&	7	&	0\\
37	&	51	&	49	&	38	&	27	&	76	&¡¡  167&		13	&	12	&	 9	&	11	&	0\\
41  &	44	&	36	&	38	&	27	&	94	&¡¡¡¡173&		8	&	13	&	 11	&	10	&	0\\
43	&	41	&	40	&	35	&	34	&	94	&¡¡¡¡179&		7	&	11	&	 9	&	10	&	0\\
47  &	41	&	32	&	36	&	39	&	76	&¡¡¡¡181&		7	&	8	&	 12	&	9	&	0\\
53	&	26	&	35	&	35	&	27	&	31	&¡¡¡¡191&		5	&	4	&	 11	&	10	&	0\\
59	&	32	&	25	&	28	&	24	&	5	&¡¡  193&		12	&	8	&	 7	&	8	&	0\\
61  &   27	&	32	&	31	&	28	&	2	&¡¡  197&		10	&	10	&	 11	&	9	&	0\\
67  &	21	&	32	&	25	&	22	&	0	&¡¡  199&		7	&	13	&	 9	&	8	&	0\\
71	&	26	&	22	&	22	&	23	&	0	&  \vdots & \vdots &  \vdots & \vdots & \vdots & \vdots \\
73	&	23	&	24	&	21	&	31	&	0	&¡¡  809 &		1	&	2	&	 3	&	0	&	0\\
79	&	24	&	20	&	22	&	15	&	0	&¡¡ \vdots &  \vdots & \vdots & \vdots & \vdots & \vdots \\
83	&	23	&	21	&	24	&	25	&	0	&¡¡  5231 &	    0   &	0   &	 1   &	0   &   0\\
89	&	18	&	12	&	15	&	18	&	0	&¡¡$\vdots$	 &$\vdots$&$\vdots$&$\vdots$&$\vdots$&$\vdots$\\
97	&	18	&	24	&	15	&	17	&	0	&¡¡ 31601&	    0   &	1	&	 0	&	0	&	0\\
101	&	19	&	10	&	21	&	17	&	0	&¡¡$\vdots$	 &$\vdots$&$\vdots$&$\vdots$&$\vdots$&$\vdots$\\
103	&	18	&	20	&	14	&	20	&	0	&¡¡ 199999  &	0&		0	&	 0	&	0&		0\\
107	&	21	&	14	&	12	&	18	&	0	&¡¡$\vdots$	 &$\vdots$&$\vdots$&$\vdots$&$\vdots$&$\vdots$\\
109	&	18	&	18	&	9	&	16	&	0	&¡¡ 499253  &	0	&	0	&	 0	&	0&		0\\
113	&	12	&	14	&	16	&	16	&	0	&¡¡$\vdots$	 &$\vdots$&$\vdots$&$\vdots$&$\vdots$&$\vdots$\\
127	&	12	&	12	&	16	&	19	&	0	&  $\vdots$	 &$\vdots$&$\vdots$&$\vdots$&$\vdots$&$\vdots$\\\
131	&	13	&	7	&	11	&	12	&	0	&    $\geq725597$	&	0	&	 0	&	0	&	0	&	0        \\ \hline
\end{tabular*}
\end{center}}\vspace{2mm}

\newpage
\begin{center}
\scriptsize{\textbf{Table\ 3} \ \ The  \ $(7,3)$-privileged coalitions with the shortest length in the track $(1,\ldots,13)$\ over\ $F_{p}$}
\end{center}
\vspace{-0.5cm}
\scriptsize{$$\begin{tabular*}{\textwidth}{c| c| lp{15cm}}\hline
p & Nr. & $(7,3)$-privileged coalitions with the shortest length\\ \hline
13&	3&	\{12, 8, 5, 1\}  \{11, 10, 3, 2\}  \{9, 7, 6, 4\}\\ \hline
17&	1&	\{11, 10, 7, 6\}\\ \hline
19&	3&	\{13, 5, 4, 3, 1\} \{13, 9, 8, 5, 2\} \{10, 8, 7, 6, 2\}\\ \hline
23&	2&	\{10, 9, 8, 6, 1\} \{13, 11, 8, 7, 1\}\\ \hline
29&	3&	\{11, 4, 3, 2, 1\} \{11, 9, 7, 5, 1\} \{12, 9, 6, 4, 3\}\\ \hline
31&	1&	\{9, 8, 5, 3, 1\}\\ \hline
37&	2&	\{12, 10, 8, 5, 1\} \{7, 6, 4, 3, 2\}\\ \hline
41&	1&	\{11, 9, 7, 6, 4\}\\ \hline
&  & \{10, 5, 4, 3, 2, 1\}\{12, 6, 5, 3, 2, 1\} \{11, 10, 9, 5, 3, 1\} \{13, 7, 5, 4, 2, 1\}\{12, 11, 6, 4, 2, 1\}\\ & &\{13, 11, 9, 5, 2, 1\} \{11, 10, 7, 6, 2, 1\}\{11, 8, 6, 3, 2, 1\}
\{12, 10, 8, 6, 5, 4\}\{13, 12, 11, 6, 5, 4\}\\ 43& 35& \{11, 9, 8, 5, 4, 1\}\{13, 9, 8, 6, 4, 1\}\{11, 10, 8, 7, 4, 1\}\{12, 11, 10, 9, 4, 1\}\{12, 11, 8, 6, 5, 1\}\\ & &\{13, 11, 10, 8, 6, 1\}\{13, 9, 5, 4, 3, 2\}\{12, 10, 8, 6, 3, 2\}
\{10, 9, 8, 7, 3, 2\}\{13, 12, 10, 5, 4, 2\}\\ & &\{12, 10, 8, 7, 4, 2\}\{12, 9, 8, 6, 5, 2\}
\{13, 10, 8, 6, 5, 2, \}\{11, 9, 8, 7, 6, 2,\}\{8, 7, 5, 4, 3, 1\} \\ & &\{12, 11, 10, 6, 5, 3\}\{13, 12, 11, 8, 6, 3\}\{11, 10, 9, 8, 7, 5\}\{11, 7, 6, 5, 4, 3\}\{12, 11, 8, 4, 3, 2\} \\ &&\{11, 9, 7, 4, 2, 1\}\{13, 10, 8, 7, 6, 4\}\{13, 11, 9, 7, 6, 1\}\{11, 10, 9, 8, 4, 3\}\{13, 11, 9, 6, 4, 2, \}\\ \hline
47&	1&	\{13,  11,  9,  8,  2\} \\ \hline
&	& \{11, 6, 5, 3, 2, 1\} \{12, 10, 7, 3, 2, 1\}\{7, 6, 5, 4, 2, 1\}\{12, 11, 8, 4, 2, 1\}\{12, 8, 6, 5, 2, 1\}\\ & & \{11, 9, 7, 5, 2, 1\}\{11, 10, 9, 6, 2, 1\}
\{12, 11, 5, 4,  3,  1\}\{12, 9, 7, 5, 3, 1\}\{13, 10, 8, 6, 3, 1\}\\ & & \{13, 12, 11, 10, 3, 1\}\{9, 8, 7, 6, 4, 1\}\{13, 10, 9,  6,  4,  1\}\{13, 12, 10, 6, 5, 1\}
\{13, 12, 9,  8,  6,  1\}\\ 53& 35&\{9,  8,  6,  5,  3,  2\}\{13,  9,  8,  7,  3,  2\}\{13,  11,  8,  7,  4,  2\} \{13,  10,  8,  7,  5,  3\}\{13,  11,  8,  7,  6,  3\} \\ & &\{10, 8, 7, 5, 4, 2\}\{13, 12, 7, 5, 4, 2\}\{11,  10,  9,  5,  4,  2\}\{13,  11,  10,  9,  8,  2\} \{12,  11,  9,  7,  6,  2\} \\& &
\{11,  8,  6,  5,  4,  3\}\{12,  10,  9,  5,  4,  3\}\{11,  10,  8,  7,  4,  3\}\{11,  9,  7,  6,  5,  3\}\{13, 12, 11, 9, 2, 1\}\\&&\{3, 9, 6, 5, 2, 1\} \{10, 9, 8, 7, 3, 1\}\{13, 12, 10, 4,  3,  2\}\{12,  11,  10,  7,  6,  3\}\{13,  12,  10,  8,  6,  2\}\\ \hline
&	&	\{10, 7, 6, 3, 2, 1\}\{11 8, 5, 4, 2, 1\}\{12, 10, 5, 4, 2, 1\} \{10,  8,  7,  4,  2,  1\}\{11,  7,  6,  5,  2,  1\}\\ & & \{12,  9,  8,  7,  6,  5\}
\{12,  9,  8,  6,  3,  1\}\{12,  8,  7,  5,  4,  1\}\{13,  10,  9,  7,  4,  1\}\{13,  12,  8,  6,  5,  1\}\\ 59& 28& \{11,  10,  7,  4,  3,  2\}\{13,  12,  11, 10,  6, 5\}
\{13,  11,  8,  4,  3,  2\}\{13,  11,  10,  6,  3,  2\}\{13,  12,  10,  8,  3,  2\}\\ & & \{13,  11,  10,  8,  5,  2\}
\{13,  12,  11,  9,  8,  2\}\{12,  9,  7,  5,  4,  3\}\{13,  12,  11,  7,  4,  3\}\{11,  10,  9,  8,  4,  3\}\\ & & \{12, 11, 10, 9, 8, 7\}\{11,  10,  8,  6,  5,  4\}\{13, 8, 7, 6, 3, 1\}\{10,  9,  8,  7,  5,  3\}\{13,  6,  5,  4,  3,  1\}\\&&\{12,  10,  9,  5,  3,  2\}\{13,  10,  7,  6,  5,  2\}\{12,  11,  9,  8,  5,  1\}\\  \hline
&	 & \{10,  9,  5,  3,  2,  1\}\{11,  10,  6,  3,  2,  1\}\{12,  10,  7,  4,  2,  1\}\{12,  9,  6,  5,  2,  1\}\{13,  11,  7,  5,  2,  1\}\\ & &
\{13, 8, 6, 4, 3, 1\}\{12, 11, 6, 5, 3, 1\}\{11, 9, 8, 5, 3, 1\}\{12, 11, 10, 5, 4, 1\}\{10,  8,  7,  6,  5,  1\}\\ & &
\{13,  12,  11,  8,  7,  1\}\{12, 6, 5, 4, 3, 2\}\{10,  9,  6,  4,  3,  2\}\{12,  10,  8,  5,  3,  2\}\{13, 12,  10,  7,  6,  1\} \\ 61& 31&
\{13,  11,  10,  5,  3,  2\}\{13,  9,  7,  6,  3,  2\}\{12,  11,  8,  7,  3,  2\}\{13,  12,  9,  7,  4,  2\}\{10,  9,  7,  6,  5,  2\}\\
& &
\{12,  10,  7,  5,  4,  3\}\{13,  11,  9,  6,  4,  3\}\{13,  12,  10,  8,  4,  3\}\{13,  11,  9,  4,  2,  1\}\{13, 12, 10, 9, 5, 1\}\\&&\{11,  10,  7,  6,  5,  3\} \{13,  12,  8,  6,  2,  1\}\{13, 12, 11, 9, 6, 5\} \{13,  12,  8,  6,  5,  4\}\{13,  12,  11,  10,  9,  2\}\\ &&\{13,  12,  9,  7,  5,  3\}\\ \hline
&	& \{12, 10, 6, 3, 2, 1\}\{10,  8,  7,  3,  2,  1\}\{8,  6,  5,  4,  2,  1\}\{11,  9,  8,  5,  2,  1\}\{12,  11,  9,  6,  2,  1\}\\& &
\{13,  12,  11,  6,  4,  1\}\{10,  8,  7,  6,  5,  1\}\{12,  9,  8,  7,  6,  1\}\{12,  11,  10,  8,  6,  1\}\{9,  7,  6,  5,  3,  2\}
\\ 67& 25& \{12,  9,  6,  5,  4,  2\}\{13,  9,  8,  6,  5,  2\}\{13,  9,  7,  5,  4,  3\}\{12,  11,  8,  5,  4,  3\}\{9,  8,  7,  6,  4,  3\}
\\ & & \{12,  11,  9,  6,  5,  3\}\{13,  11,  10,  8,  7,  3\}\{13,  10,  8,  7,  5,  4\}\{13,  11,  9,  8,  5,  4\}\{11, 9, 7, 5, 4, 1\}\\&&\{13,  12,  10,  9,  8,  6\}\{13, 12,  8, 5, 3, 1\}\{13,  12,  9,  5,  3,  2\}\{11,  10,  7,  6,  4,  3\}\{12,  11,  9,  8,  4,  1\}\\ \hline
&		&\{13,  12,  4,  3,  2,  1\}\{13,  10,  9,  3,  2,  1\}\{11,  10,  6,  4,  2,  1\}\{7,  6,  5,  4,  3,  1\}\{13,  8,  5,  4,  3,  1\}
\\ & & \{9,  8,  5,  4,  2,  1\}\{11,  9,  8,  4,  3,  1\}\{12,  11,  9,  7,  4,  1\}\{10,  9,  7,  6,  5,  1\}\{12,  11,  10,  9,  6,  1\}\\ 71& 22&
\{10,  9,  7,  4,  3,  2\}\{12,  10,  8,  5,  4,  2\}\{13,  12,  11,  6,  4,  2\}\{13,  11,  10,  7,  4,  2\}\{13,  12,  9,  8,  5,  2\}\\ & &
\{10,  9,  8,  6,  4,  3\}\{12,  11,  8,  7,  4,  3\}\{13,  12,  10,  9,  7,  3\}\{13,  12,  11,  10,  9,  5\}\{13,  9,  7,  4,  3,  1\}\\&&\{10,  6,  5,  4,  3,  2\}\{13,  9,  8,  7,  6,  2\}\\ \hline
&	&	\{13,  8,  4,  3,  2,  1\}\{10,  8,  5,  3,  2,  1\}\{12,  10,  9,  5,  2,  1\}\{12,  11,  9,  6,  2,  1\}\{10,  9,  7,  4,  3,  1\}\\ & &
\{13,  10,  7,  5,  3,  1\}\{13,  9,  8,  7,  3,  1\}\{12,  11,  8,  7,  6,  1\}\{12,  11,  10,  8,  3,  2\}\{11,  9,  8,  5,  4,  2\}\\ 73& 21&
\{13, 11, 7, 6, 5, 2\}\{11,  10,  9,  6,  5,  2\}\{13,  12,  8,  7,  5,  2\}\{13,  11,  10,  9,  7,  2\}\{11,  10,  8,  6,  4,  3\}\\ & & \{12,  10,  9,  8,  6,  4\}\{13,  12,  9,  8,  5,  4\}\{12,  11,  10,  7,  5,  3\}\{13, 11, 10, 8, 4, 2\}\{13,  12,  9,  4,  3,  1\}\\&& \{12,  8,  7,  6,  5,  3\}\\ \hline
&	&	\{9,  7,  6,  3,  2,  1\}\{13,  11,  10,  3,  2,  1\}\{13,  10,  5,  4,  2,  1\}\{13,  7,  6,  5,  2,  1\}\{10,  9,  6,  5,  2,  1\}\\ & &
\{12,  11,  7,  5,  3,  1\}\{10,  9,  7,  5,  4,  1\}\{12,  11,  8,  5,  4,  1\}\{9,  6,  5,  4,  3,  2\}\{11,  9,  7,  5,  3,  2\}\\ & &
\{13,  10,  8,  6,  5,  2\}\{11,  8,  6,  5,  4,  3\}\{13,  12,  10,  5,  4,  3\}\{13,  9,  8,  7,  4,  3\}\{12,  8,  7,  6,  5,  3\}\\ 79& 22&
\{13,  11,  9,  8,  5,  4\}\{12,  11,  10,  7,  6,  4\}\{13,  11,  10,  9,  6,  5\}\{13,  11,  10,  8,  7,  6\}\{13,  12,  10,  8,  2,  1\}\\&&\{12,  8,  7,  6,  4,  2\}\{13,  12,  9,  7,  5,  3\}\\ \hline
&	&\{12,  5,  4,  3,  2,  1\}\{13,  7,  4,  3,  2,  1\}\{8,  7,  6,  4,  2,  1\}\{12,  11,  7,  4,  2,  1\}\{13,  10,  5,  4,  3,  1\}\\ & & \{13,  10,  7,  6,  3,  1\}\{13,  12,  9,  8,  4,  1\}\{12,  9,  8,  6,  5,  1\}\{11,  10,  9,  8,  5,  1\}\{13,  11,  8,  7,  6,  1\}\\ 83& 24&
\{12,  9,  8,  6,  3,  2\}\{11,  10,  9,  6,  3,  2\}\{13,  9,  7,  6,  4,  2\}\{12,  10,  9,  6,  4,  2\}\{12,  11,  10,  8,  4,  2\}\\ & &
\{13,  12,  10,  7,  5,  2\}\{9,  8,  7,  5,  4,  3\}\{13,  12,  10,  8,  4,  3\}\{12,  10,  9,  7,  5,  4\}\{11,  10,  9,  8,  7,  4\}\\&&\{10,  8,  6,  5,  3,  1\}\{12,  11,  7,  5,  3,  2\}\{12,  11,  10,  6,  5,  2\} \{13,  12,  11,  10,  9,  7\}\\ \hline

\end{tabular*}
$$}

\scriptsize{$$\begin{tabular*}{\textwidth}{c| c |lp{5cm}}\hline
&	&	\{12, 9, 5, 4, 2, 1\}\{9,  8,  6,  5,  2,  1\}\{13, 11, 5, 4, 3, 1\}\{12,  11, 8, 5, 3, 1\}\{11,  8,  6,  5,  4,  1\}\\ 89& 15& \{13, 7, 5, 4, 3, 2\}\{13, 11, 10, 8, 7, 3\}
\{13,  11,  10,  4,  3,  2\}\{12,  10,  9,  5,  3,  2\}\{11,  10,  9,  6,  5,  2\}\\ & & \{12,  8,  6,  5,  4,  3\}\{13, 12, 7, 6, 5, 3\}\{13,  12,  11,  10,  6, 3\}\{13,  12,  8,  5,  4,  1\}\{12,  11,  10,  7,  6,  2\}\\ \hline
97&	1&	\{12,  11,  10,  5,  4\}\\ \hline
 &	&\{12,  10,  4,  3,  2,  1\}\{11,  10,  7,  3,  2,  1\}\{13,  9,  8,  3,  2,  1\}\{12,  10,  7,  5,  2,  1\}\{13,  10,  7,  6,  2,  1\}\\ & &\{13,  11,  9,  8,  6,  3\}
\{13,  12,  11,  7,  2,  1\}\{10,  7,  5,  4,  3,  1\}\{13,  11,  8,  4,  3,  1\}\{9,  7,  6,  5,  4,  1\}\\ 101& 21&\{11,  10,  9,  7,  6,  1\}\{13,  11,  7,  6,  5,  4\}
\{12,  11,  8,  7,  3,  2\}\{13,  9,  6,  5,  4,  2\}\{13,  10,  9,  8,  6,  2\}\\ & &\{12, 11, 10, 8, 4, 3\}\{12,  10,  9,  6,  5,  3\}\{13,  11,  10,  9,  8,  5\}\{11,  9,  8,  7,  2,  1\}\{12,  10,  9,  6,  4,  1\}\\ && \{13,  10,  8,  6,  4,  3\}\\ \hline
 &	&\{10, 7, 5, 4, 2, 1\}\{12, 10, 6, 4, 2, 1\}\{12, 11, 9, 4, 2, 1\}\{12,  11,  8,  6,  2,  1\}\{12,  7,  6,  4,  3,  1\}\\ 103& 14&\{12, 11, 10, 7,  5,  4\}
\{13, 10, 9, 5, 3, 2\}\{11, 10, 9, 5, 4, 2\}\{13, 12, 11, 9, 5, 2\}\{10, 9, 6, 5, 4, 3\}\\ & &\{12,  9,  7,  6,  5,  3\}\{13, 11, 10, 8, 7, 4\}\{9,  8,  5,  4,  3,  2\}\{13, 12,  10,  5,  4,  3\}\\ \hline
&		&\{13,  12,  10,  3,  2,  1\}\{10,  8,  6,  4,  2,  1\}\{12,  9,  8,  4,  2,  1\}\{12,  11,  7,  6,  2,  1\}\{13,  12,  7,  5,  4,  1\}\\ 107& 12&
\{11,  10,  9,  8,  6,  1\}\{11,  9,  8,  5,  3,  2\}\{13,  10,  7,  6,  5,  2\}\{13,  12,  11,  9,  8,  3\}\{12,  10,  9,  7,  5,  4\}\\&&\{13,  12,  11,  10,  4,  1\}\{13,  9,  8,  7,  6,  5\}\\ \hline
 &		&\{13,  12,  10,  3,  2,  1\}\{10,  8,  6,  4,  2,  1\}\{12,  9,  8,  4,  2,  1\}\{12,  11,  7,  6,  2,  1\}\{13,  12,  7,  5,  4,  1\}\\ 107& 12&
\{11,  10,  9,  8,  6,  1\}\{11,  9,  8,  5,  3,  2\}\{13,  10,  7,  6,  5,  2\}\{13,  12,  11,  9,  8,  3\}\{12,  10,  9,  7,  5,  4\}\\&&\{13,  12,  11,  10,  4,  1\}\{13,  9,  8,  7,  6,  5\}\\ \hline
 &	&	\{10,  9,  5,  4,  3,  1\}\{13,  12,  7,  4,  3,  1\}\{11,  10,  9,  7,  5,  1\}\{13,  11,  10,  8,  7,  1\}\{13,  12,  9,  7,  3,  2\}\\ 109& 9&
\{10,  8,  6,  5,  4,  3\}\{13,  12,  9,  8,  6,  4\}\{10,  9,  8,  7,  6,  5\}\{13,  7,  6,  5,  4,  3\}\\ \hline
113&	1&	\{13,  12,  7,  5,  2\}\\ \hline
&&\{9,  6,  5,  3,  2,  1\}\{13,  11,  9,  3,  2,  1\}\{13,  12,  5,  4,  2,  1\}\{11,  10,  8,  5,  3,  1\}\{13,  11,  8,  7,  3,  1\}\\ 127& 16&
\{13,  10,  9,  8,  7,  1\}\{13,  12,  8,  4,  3,  2\}\{13,  9,  8,  5,  3,  2\}\{12,  11,  8,  5,  3,  2\}\{11,  10,  8,  5,  4,  2\}\\ & &
\{10,  9,  8,  5,  4,  3\}\{13,  8,  7,  6,  5,  3\}\{13,  11,  10,  9,  5,  4\}\{13,  12,  10,  8,  6,  4\}\{9,  8,  7,  6,  5,  1\}\\&&\{13,  12,  10,  9,  7,  2\}\\ \hline
&		&\{11,  6,  5,  4,  2,  1\}\{13,  8,  6,  4,  2,  1\}\{12,  11,  9,  5,  2,  1\}\{9,  8,  6,  4,  3,  2\}\{13,  12,  11,  10,  8,  2\}\\ 131& 11&\{10,  8,  7,  6,  5,  3\}
\{13,  11,  10,  7,  5,  3\}\{12,  11,  10,  9,  5,  3\}\{13,  10,  8,  7,  5,  4\}\{13,  11,  10,  8,  6,  4\}\\&&\{13,  12,  8,  7,  6,  5\}\\ \hline
&	&	\{13,  12,  10,  7,  2,  1\}\{13,  7,  6,  4,  3,  1\}\{11,  10,  8,  5,  4,  1\}\{13,  11,  10,  9,  4,  1\{10,  8,  7,  5,  3,  2\}\\ 137&11 &
\{12,  11,  10,  9,  7,  2\}\{10,  9,  7,  6,  5,  3\}\{13,  11,  7,  6,  5,  3\}\{13,  9,  8,  7,  6,  4\}\{12,  11,  9,  8,  6,  5\}\\&&\{11,  10,  9,  8,  5,  2\}\\ \hline
&	&\{13,  12,  8,  3,  2,  1\}\{9,  7,  6,  4,  2,  1\}\{13,  7,  6,  5,  3,  1\}\{12,  10,  9,  7,  3,  1\}\{11,  10,  9,  6,  5,  1\}\\ 139& 10&
\{11,  6,  5,  4,  2\}\{10,  9,  8,  5,  4,  2\}\{13,  12,  9,  7,  4,  3\}\{13,  12,  11,  7,  5,  3\}\{8,  7,  6,  5,  4,  2\}\\ \hline
149	&1&	\{13, 10,  6,  5,  1\}\\ \hline
	&	&\{10, 9, 8, 7, 2, 1\}\{11, 9,  8, 7, 3, 1\}\{13, 12, 10, 7, 6, 1\}\{12, 11, 9, 8, 6, 1\}\{11,  8,  6,  4,  3,  2\}\\ 151& 13&\{11,  10,  7,  6,  4,  2\}
\{12,  9,  7,  6,  5,  2\}\{12,  11,  9,  8,  7,  2\}\{12,  10,  7,  6,  4,  3\}\{12,  11,  10,  9,  8,  3\}\\ & &\{13,  12,  8,  7,  6,  4\}\{13,  10,  9,  8,  7,  5\}\{10,  9,  8,  4,  3,  2\}\\ \hline
&	&\{12, 9, 6, 3, 2, 1\}\{11, 7, 5, 4, 2, 1\}\{13, 8, 5, 4, 2, 1\}\{10,  7,  6,  5,  3,  1\}\{13,  11,  9,  7,  3,  1\}\\ 157& 13&
\{13,  10,  7,  5,  4,  1\}\{12,  9,  8,  5,  3,  2\}\{12,  11,  8,  7,  5,  2\}\{13,  9,  8,  6,  4,  3\}\{12,  10,  9,  8,  6,  3\}\\ &&\{13, 10,  8,  6,  5,  4\}\{13,  12,  11,  10,  6,  3\}\{11,  10,  9,  8,  3,  1\}\\ \hline¡¡
	&&	\{11,  10,  7,  5,  3,  2\}\{13,  12,  11,  5,  3,  2\}\{9,  8,  6,  5,  4,  2\}\{11,  8,  7,  5,  4,  2\}\{13,  12,  7,  6,  4,  2\}\\ 163& 9&
\{11,  10,  9,  8,  6,  2\}\{11,  9,  7,  5,  4,  3\}\{13,  12,  11,  10,  9,  6\}¡¡\{13,  10,  8,  7,  5,  2\}\\ \hline¡¡
	&&\{12,  9,  6,  4,  2,  1\}\{13,  11,  6,  5,  2,  1\}\{12,  9,  7,  4,  3,  1\}\{12,  11,  10,  7,  3,  1\}\{13,  12,  7,  6,  4,  1\}\\ 167& 9	&
\{12,  11,  10,  9,  6,  2\}\{11,  9,  7,  5,  4,  3\}\{13,  12,  11,  10,  9,  6\}	\{12,  10,  7,  5,  3,  2\} \\ \hline
173&	1&	\{13, 12, 9, 7, 6\}\\ \hline
	&	&\{12,  11,  9,  3,  2,  1\}\{12,  10,  6,  5,  2,  1\}\{13,  8,  7,  5,  2,  1\}\{8,  7,  6,  5,  3,  1\}\{10,  9,  7,  6,  3,  1\}\\ 179& 9&
\{12,  8,  7,  5,  4,  3\}\{13,  11,  10,  7,  4,  3\}\{13,  11,  10,  7,  6,  5\}\{12,  11,  7,  6,  4,  1\}\\ \hline
	&	&\{8,  5,  4,  3,  2,  1\}\{13,  11,  7,  4,  2,  1\}\{9,  8,  7,  6,  2,  1\}\{13,  11,  10,  8,  3,  1\}\{13,  12,  10,  6,  4,  1\}\\ 181& 12&
\{11,  9,  6,  5,  3,  2\}\{12,  11,  10,  7,  3,  2\}\{13,  9,  8,  7,  5,  2\}\{13,  12,  10,  8,  7,  2\}\{13,  11,  10,  9,  7,  3\}\\&&\{12,  10,  8,  7,  5,  4\}\{9,  7,  5,  4,  3,  2\}\\ \hline
	&	&\{10,  7,  5,  3,  2,  1\}\{11,  7,  6,  4,  2,  1\}\{13,  10,  7,  4,  3,  1\}\{13,  11,  7,  6,  3,  1\}\{13,  12,  11,  9,  4,  1\}
\\ 191& 11&\{12,  10,  9,  6,  3,  2\}\{13,  8,  7,  6,  5,  4\}\{11,  10,  9,  7,  5,  4\}\{13,  12,  10,  8,  5,  4\}\{12,  11,  10,  8,  6,  4\}\\&& \{13,  12,  7,  6,  3,  2\}\\ \hline
	& &	\{12,  11,  10,  8,  2,  1\}\{11,  10,  8,  4,  3,  1\}\{13,  11,  5,  4,  3,  2\}\{10,  8,  7,  4,  3,  2\}\{11,  10,  8,  6,  3,  2\}\\ 193& 7&\{12,  10,  9,  8,  7,  2\}\{13,  11,  8,  7,  6,  5\}\\ \hline
	&	&\{11,  9,  5,  3,  2,  1\}\{12,  9,  8,  6,  2,  1\}\{12,  11,  10,  7,  2,  1\}\{13,  9,  7,  5,  3,  1\}\{11,  8,  7,  6,  3,  1\}\\ 197& 11&
\{12,  8,  7,  5,  4,  2\}\{13,  9,  7,  6,  5,  2\}\{12,  11,  7,  5,  4,  3\}\{12,  11,  10,  6,  4,  3\}\{13,  11,  10,  8,  5,  4\}\\&&\{13,  11,  8,  5,  3,  2\}\\ \hline
¡¡&		&\{13,  9,  4,  3,  2,  1\}\{13,  11,  8,  5,  2,  1\}\{13,  10,  8,  6,  4,  1\}\{11,  10,  9,  8,  7,  1\}\{12,  8,  6,  5,  4,  2\}\\ 199& 9&\{12,  10,  7,  6,  5,  2\}
\{10,  9,  8,  7,  5,  2\}\{12,  11,  10,  6,  5,  3\}\\&&\{12,  11,  10,  6,  4,  2\}\\ \hline
\vdots & & \ \ \ \ \ \ \ \ \ \ \ \ \ \ \ \ \ \ \ \ \ \ \ \ \ \ \ \ \ \ \ \ \ \vdots \\ \hline
$\geq22787$ && no privileged coalitions \ \ \ \ \ (proven) \\ \hline
\end{tabular*}
$$}
¡¡
¡¡
¡¡
¡¡
¡¡
¡¡

¡¡
¡¡
¡¡
¡¡
¡¡
¡¡
¡¡
¡¡
¡¡
¡¡
¡¡
¡¡
¡¡
¡¡
¡¡
¡¡
¡¡
¡¡
¡¡
¡¡	
¡¡
¡¡
¡¡
¡¡
¡¡
¡¡
¡¡
¡¡
¡¡

\end{document}